\documentclass[preprint%superscriptaddress
]{revtex4-1}
\usepackage{graphicx}
\usepackage{amsfonts}
\usepackage{amssymb}
\usepackage{amsthm}
\usepackage{array}
\usepackage{amsmath}
\usepackage{verbatim} 
\usepackage{hyperref}
\usepackage{color}
\usepackage{bbold}
\usepackage{epstopdf}
\usepackage{mathtools}
\usepackage{enumerate}
\usepackage{subcaption}
\usepackage{xcolor}
\usepackage[normalem]{ulem}

\newtheorem{proposition}{Proposition}
\newtheorem*{proposition*}{Proposition}

\newtheorem{theorem}{Theorem}
\newtheorem*{theorem*}{Theorem}
\newtheorem{corollary}{Corollary}[theorem]
\newtheorem*{corollary*}{Corollary}

\newcommand{\ket}[1]{\left\vert#1\right\rangle}
\newcommand{\bra}[1]{\left\langle#1\right\vert}

\newcommand{\abs}[1]{\left|#1\right|}

\captionsetup{justification   = raggedright,
              singlelinecheck = false}

\def\Tr{\mathrm{Tr}}

\begin{document}
\title{Nontrivial quantum observables can always be optimized via some form of coherence}
\author{Kok Chuan Tan}
\email{bbtankc@gmail.com}
\author{Seongjeon Choi}
\author{Hyunseok Jeong}
\email{jeongh@snu.ac.kr}
\affiliation{Center for Macroscopic Quantum Control \& Institute of Applied Physics, Department of Physics and Astronomy, Seoul National University, Seoul, 151-742, Korea}
\date{\today}

\begin{abstract}
In this paper we consider quantum resources required to maximize the mean values of any nontrivial quantum observable. We show that the task of maximizing the mean value of an observable is equivalent to maximizing some form of coherence, up to the application of an incoherent operation. As such, for any nontrivial observable, there exists a set of preferred basis states where the superposition between such states is always useful for optimizing a quantum observable. The usefulness of such states is expressed in terms of an infinitely large family of valid coherence measures which is then shown to be efficiently computable via a semidefinite program. We also show that these coherence measures respect a hierarchy that gives the robustness of coherence and the $l_1$ norm of coherence additional operational significance in terms of such optimization tasks.
\end{abstract}

\maketitle

\section{Introduction}
Quantum coherence has long been recognized as a fundamental aspect of quantum mechanics. In comparison however, the identification of quantum coherence as a useful and quantifiable resource is a much more recent development. Progress in this area has been greatly accelerated via the introduction of the so-called resource theoretical framework \cite{Aberg2006, Baumgratz2014, Levi2014}. Inspired by the resource theory of entanglement~\cite{Horodecki2001, Horodecki2009}, the notion of what quantum coherence is, as well as how it is to be quantified is now axiomatically defined, thus allowing quantum coherence phenomena to be discussed much more unambiguously. Since this development, many coherence measures have been proposed. Some known measures now include geometric measures\cite{Baumgratz2014}, the robustness of coherence\cite{Napoli2016, Piani2016}, as well as entanglement based measures\cite{Streltsov15}. Coherence measures are have now been studied in relation to a diverse range of quantum effects such as the quantum interference\cite{Wang2017}, exponential speed-up of quantum algorithms\cite{Hillery2016, Matera2016} and quantum metrology\cite{Giorda2016, Tan2018}, nonclassical light~\cite{Tan2017, Kwon2018, Yadin2018}, quantum  macroscopicity~\cite{Yadin2015, Kwon2017} and quantum correlations~\cite{Tan2016, Killoran2016, Regula2017, Wu2017, Ma2016}.  An overview of coherence measures  and their structure may be found in~\cite{Streltsov2017, Streltsov2017rev}.

The computation of such coherence measures usually require full prior knowledge of the input states, which in many cases is difficult to achieve. In contrast, a coherence witness is typically much simpler to implement in the laboratory. The problem of witnessing coherence is equivalent to the problem of constructing some Hermitian observable $W$ which permits positive values $\Tr(\rho W) > 0$ only when $\rho$ is coherent (note that the converse may not be true). In general, for any given coherent state $\rho$, such a witness can always be found \cite{Napoli2016}. 

In this paper, we show that the existence of coherence witnesses may in fact be far more prevalent than one would initially expect. In fact, we demonstrate that every nontrivial Hermitian observable is a witness for at least \textit{some} form of coherence. This suggests that one does not always need additional apparatus in order to detect coherence -- the existence of coherence in many cases may be inferred from existing measurements. We then consider the task of optimizing some objective function $\langle M \rangle $ where $M$ is a quantum observable, and show that the task of optimizing the observable is the same as the task of maximizing the coherence of the input state, up to some incoherent operation. We then show that this leads to an infinitely large class of coherence measures that is computable via a semidefinite program. We also demonstrate that the robustness of coherence and the $l_1$ norm of coherence establishes the quantum limits of such tasks.

\section{Preliminaries} \label{Preliminaries}

We review some elementary concepts concerning coherence measures, quantum channels and semidefinite programs. 

We first briefly describe the formalism of quantum channels, which we take here to mean the set of all Completely Positive, Trace Preserving (CPTP) maps. There are several equivalent characterizations of quantum maps, but for our purposes, we will be concerned with the Kraus~\cite{Kraus} and the Choi-Jamiolkowski representations~\cite{Choi, Jamiolkowski}. In the Kraus representation, a quantum operation is represented by a map of the form $\Phi(\rho) = \sum_i K_i \rho K_i^\dag$ which is completely specified by a set of operators $\{K_i\}$ called Kraus Operators. The Kraus operators must satisfy the completeness relation $\sum_i K^\dag_i K_i = \openone$ in order to qualify as a valid quantum operation. In the Choi-Jamiolkowski representation, a quantum map $\Phi$ is represented by an operator $J(\Phi) = \sum_{i,j}\Phi(\ket{i}_A\bra{j}) \otimes \ket{i}_B\bra{j}$ which satisfies $\Tr_A[J(\Phi)] = \openone_B$. The action of $\Phi$ on some state $\rho$ is then recovered via the map $\Tr_B[J(\Phi) \openone_A \otimes \rho_B^T] = \Phi(\rho_A)$. A simple relationship connects both equivalent representations. For a map $\Phi$ represented by Kraus operators $\{K_i = \sum_{j,k}K_{i,j,k} \ket{j}\bra{k}\}$, the corresponding Choi-Jamiolkowski representation is $J(\Phi) = \sum_{i} v_i v^\dag_i$ where $v_i \coloneqq \sum_{j,k}K_{i,j,k} \ket{j}\ket{k}$.

The notion of coherence that we will employ in this paper will be the one identified in~\cite{Aberg2006, Baumgratz2014}, where a set of axioms are identified in order to specify a reasonable measure of quantum coherence. The axioms are as follows:

For a given fixed basis $\{ \ket{i} \}$, the set of incoherent states $\cal I$ is the set of quantum states with diagonal density matrices with respect to this basis. Incoherent completely positive and trace preserving maps (ICPTP)
 are maps that map every incoherent state to another incoherent state. Given this, we say that  $\mathcal{C}$ is a measure of quantum coherence if it satisfies following properties:
(C1) $\mathcal{C}(\rho) \geq 0$ for any quantum state $\rho$ and equality holds if and only if $\rho \in \cal I$.
(C2a) The measure is non-increasing under a ICPTP map $\Phi$ , i.e., $C(\rho) \geq C(\Phi(\rho))$.
(C2b) Monotonicity for average coherence under selective outcomes of ICPTP:
$C(\rho) \geq \sum_n p_n C(\rho_n)$, where $\rho_n = K_n \rho K_n^\dagger/p_n$ and $p_n = \Tr [K_n \rho K^\dagger_n ]$ for all $K_n$ with $\sum_n K_n K^\dagger_n = \mathbb 1$ and $K_n {\cal I} K_n^\dagger \subseteq \cal I$.
(C3) Convexity, i.e. $\lambda C(\rho) + (1-\lambda) C(\sigma) \geq C(\lambda \rho + (1-\lambda) \sigma)$, for any density matrix $\rho$ and $\sigma$ with $0\leq \lambda \leq 1$.

One may check that a particular operation is incoherent if its Kraus operators always maps a diagonal density matrix to another diagonal density matrix. One important example of such an operation is the CNOT gate. We can also additionally distinguish between the maximal set of ICPTP maps, which from now on we refer to as maximally incoherent operations (MIO) from the set of ICPTP maps whose Kraus operators that additionally satisfy $K_n {\cal I} K_n^\dagger \subseteq \cal I$, which we refer to as simply incoherent operations (IO) . From this definition, it is clear that $\mathrm{IO} \subset \mathrm{MIO}$.

Finally, we review some basic notions regarding semidefinite programs. A semidefinite program is a linear optimization problem over the set of positive matrices $X$, subject to a set of constraints that can be expressed in the following form:

\begin{equation*}
\begin{aligned}
& \underset{X \geq 0}{\text{max}}
& & \Tr(AX) \\
& \text{subject to}
& & \phi_i (X) = B_i, \; i = 1, \ldots, m. 
\end{aligned}
\end{equation*}                                              

where $A$ and $B_i$ are Hermitian matrices and $\phi_i$ is a linear, Hermiticity preserving map (i.e. it maps every Hermitian matrix to another Hermitian matrix) representing the $i$th constraint. The above is called the primal problem. The optimal solution to the primal problem is always upper bounded by the optimal solution to the dual problem, when they exist. The dual problem may be written as the following optimization problem over all possible Hermitian matrices $Y_i$:

\begin{equation*}
\begin{aligned}
& \underset{\{Y_i = Y_i^\dag\}}{\text{min}}
& & \sum_{i=1}^m \Tr(B_i Y_i) \\
& \text{subject to}
& & \sum_{i=1}^m \phi^*_i (Y_i) \geq A. 
\end{aligned}
\end{equation*}   

In this case, $\phi_i^*$ refers to the conjugate map that satisfies $\Tr[C^\dag \phi_i (D)] = \Tr[ \phi^*_i (C)^\dag D]$ for every matrix $C$ and $D$.

In fact, the solutions to the primal and dual problems are almost always equal except in the most extreme cases. Nonetheless, this still needs to be verified on a case by case basis. A sufficient condition for both primal and dual solution to be equal is called Slater's Theorem, which states that if the set of positive matrices $X$ that satisfies all the constraints $\phi_i$ is nonempty, and if the set of Hermitian matrices $\{ Y_i \}$ that satisfies the strict inequality $\sum_{i=1}^m \phi^*_i (Y_i) > A$ is also nonempty, then the optimal solutions for both problems, also referred to as the the optimal primal value and the optimal dual value, must be equal.

\section{Coherence measures from maximally incoherent operations.}
\label{coherence measures}

\begin{theorem} \label{thm::strongMono}
For any quantum observable $M$ and quantum state $\rho$, the quantity $$ \max_{\Phi \in \mathcal{O}}\mathrm{Tr}(M \Phi(\rho))$$ is strongly monotonic under incoherent operations, where $\mathcal{O}$ may be substituted with either the set of operations MIO or IO.
\end{theorem}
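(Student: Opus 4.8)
The plan is to establish the strong monotonicity property (C2b) directly by exhibiting, for a given selective incoherent operation, a single map in $\mathcal{O}$ whose performance equals the probability-weighted sum of the optimal post-measurement performances. Write $f(\rho) \coloneqq \max_{\Phi \in \mathcal{O}} \Tr(M\Phi(\rho))$. Since $\mathcal{O}$ (whether MIO or IO) is a closed set of CPTP maps and the objective is linear, hence continuous, in the Choi--Jamiolkowski representation $J(\Phi)$, the maximum is attained and $f$ is well-defined. Now fix Kraus operators $\{K_n\}$ of an incoherent selective operation, satisfying the completeness relation $\sum_n K_n^\dagger K_n = \openone$ and $K_n \mathcal{I} K_n^\dagger \subseteq \mathcal{I}$, and set $p_n = \Tr[K_n \rho K_n^\dagger]$ and $\rho_n = K_n \rho K_n^\dagger / p_n$. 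For each $n$, let $\Phi_n \in \mathcal{O}$ be an operation attaining $f(\rho_n) = \Tr(M\Phi_n(\rho_n))$.

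The central object is the concatenated map
\begin{equation*}
\Phi(\sigma) \coloneqq \sum_n \Phi_n\!\left(K_n \sigma K_n^\dagger\right),
\end{equation*}
which performs the selective measurement and then applies to each branch its locally optimal operation. First I would check that $\Phi$ is CPTP: complete positivity is immediate, since each summand composes the CP map $\sigma \mapsto K_n \sigma K_n^\dagger$ with the CP map $\Phi_n$, and trace preservation follows from $\Tr[\Phi(\sigma)] = \sum_n \Tr[K_n \sigma K_n^\dagger] = \Tr[\sigma]$, using that each $\Phi_n$ is trace preserving together with $\sum_n K_n^\dagger K_n = \openone$. Then, using linearity of each $\Phi_n$ together with $K_n \rho K_n^\dagger = p_n \rho_n$, a short calculation gives
\begin{equation*}
\Tr(M \Phi(\rho)) = \sum_n \Tr\!\left(M \Phi_n(K_n \rho K_n^\dagger)\right) = \sum_n p_n \Tr(M \Phi_n(\rho_n)) = \sum_n p_n f(\rho_n).
\end{equation*}
Because $\Phi$ is itself a candidate in the maximization defining $f(\rho)$, the bound $f(\rho) \geq \Tr(M\Phi(\rho)) = \sum_n p_n f(\rho_n)$ follows immediately, which is precisely the strong monotonicity (C2b).

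The step I expect to be the main obstacle is verifying that the constructed $\Phi$ genuinely lies in $\mathcal{O}$, and this is where the two cases must be handled separately. For $\mathcal{O} = \mathrm{MIO}$, I would take an arbitrary diagonal $\delta \in \mathcal{I}$ and argue that $K_n \delta K_n^\dagger$ is again diagonal because $\{K_n\}$ is incoherent, and that $\Phi_n$ preserves diagonality because it is MIO; summing over $n$ keeps $\Phi(\delta)$ diagonal, so $\Phi$ maps $\mathcal{I}$ into $\mathcal{I}$. For $\mathcal{O} = \mathrm{IO}$, I would instead track Kraus operators: writing $\{L_{n,m}\}_m$ for the incoherent Kraus operators of $\Phi_n$, the map $\Phi$ has Kraus operators $\{L_{n,m} K_n\}_{n,m}$, and I would check that each composite $L_{n,m} K_n$ sends incoherent states to incoherent states by applying the incoherence of $K_n$ first and that of $L_{n,m}$ second, while completeness $\sum_{n,m}(L_{n,m}K_n)^\dagger(L_{n,m}K_n) = \sum_n K_n^\dagger K_n = \openone$ follows from trace preservation of each $\Phi_n$. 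The care required is purely in confirming that composing incoherent Kraus operators preserves the defining property $K \mathcal{I} K^\dagger \subseteq \mathcal{I}$; this is routine but must be verified to close the argument in the IO case.
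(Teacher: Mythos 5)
Your proof is correct and follows essentially the same route as the paper: both construct the concatenated map $\Phi(\sigma) = \sum_n \Phi_n(K_n \sigma K_n^\dagger)$ from the branch-optimal maps $\Phi_n \in \mathcal{O}$, verify it belongs to $\mathcal{O}$, and use it as a feasible point in the maximization to obtain $f(\rho) \geq \sum_n p_n f(\rho_n)$. The only difference is that you explicitly fill in details the paper leaves implicit (attainment of the maximum, the CPTP check, and the separate MIO/IO membership arguments), which strengthens rather than changes the argument.
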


\begin{proof}
We first observe that any incoherent operation represented by some set of incoherent Kraus operators $\{ K^{\mathrm{IO}}_i \}$ is, by definition, also a maximally incoherent operation. Note that for any set of maximally incoherent operations $\{ \Omega_i^{\mathrm{MIO}} \mid \Omega_i^{\mathrm{MIO}} \in \mathrm{MIO} \}$, the map $\Omega(\rho) \coloneqq \sum_i \Omega_{i}^{\mathrm{MIO}}(K^{\mathrm{IO}}_i\rho K^{\mathrm{IO} \dag}_i )$ is also maximally incoherent since it is just a concatenation of the incoherent operation represented by $\{ K^{\mathrm{IO}}_i \}$, followed by performing a maximally incoherent operation $\Omega_i^{\mathrm{MIO}}$ conditioned on the measurement outcome $i$. Let us assume that $\Omega_i^{\mathrm{MIO}}(\rho_i)$
 is the optimal maximally incoherent operation maximizing $\mathrm{Tr}(M \Omega_i^{\mathrm{MIO}}(\rho_i))$ for the state $\rho_i \coloneqq K_i^{\mathrm{IO}}\rho K_i^{\mathrm{IO}\dag} /\Tr (K_i^{\mathrm{IO}}\rho  K_i^{\mathrm{IO}\dag})$, we then have the following series of inequalities:

\begin{align*}
\max_{\Phi \in MIO}\mathrm{Tr}(M \Phi(\rho)) &\geq \mathrm{Tr}(M \Omega(\rho)) \\
& = \mathrm{Tr}[M \sum_i \Omega_i^{\mathrm{MIO}}(K_i^{\mathrm{IO}}\rho K_i^{\mathrm{IO}\dag})] \\
& = \mathrm{Tr}[M \sum_i  p_i\Omega_i^\mathrm{MIO}(\rho_i)] \\
& = \sum_i p_i \max_{\Phi_i \in MIO}\mathrm{Tr}(M \Phi_i(\rho_i)),
\end{align*} where $\rho_i \coloneqq K_i^{\mathrm{IO}}\rho K_i^{\mathrm{IO}\dag} /\Tr (K_i^{\mathrm{IO}}\rho  K_i^{\mathrm{IO}\dag})$ and $p_i \coloneqq \Tr (K_i^{\mathrm{IO}}\rho  K_i^{\mathrm{IO}\dag)})$. We note that the last line is simply the expression for strong monotonicity, which proves the result for the case when $\mathcal{O}$ is MIO. Identical arguments apply when considering IO, which completes the proof.
\end{proof}

In the above, we see that the optimization over MIO in fact yields a valid coherence monotone in within the regime of IO, so in fact, drawing a sharp distinction between the two sets of operations is not always necessary.

We note that satisfying strong monotonicity qualifies the quantity to be considered a coherence monotone, but is insufficient to qualify it to be considered as a coherence measure. In order for that to happen, we still need to demonstrate that $\max_{\Phi \in \mathcal{O}}\mathrm{Tr}(M \Phi(\rho)) = 0$ iff $\rho$ is an incoherent state, and $\max_{\Phi \in \mathcal{O}}\mathrm{Tr}(M \Phi(\rho)) > 0$ whenever $\rho$ is a coherent state. It is clear that this is only true for some special cases of $M$. However, the following theorem shows that even if $M$ does not by itself satisfy the above condition, it is still always possible to construct a valid coherence measure using $M$.

\begin{theorem}\label{thm::measure}
Let M be some Hermitian quantum observable with a complete set of eigenstates denoted by $\{\ket{\alpha_i}\}$. Then there always exists some basis $\{ \ket{i} \}$ such that $\bra{i}(M - \frac{\Tr{M}}{d}\openone)\ket{i} = 0$ for every $\ket{i}$ where $d$ is the dimension of the Hilbert space.

Furthermore, for every nontrivial quantum observable $M$, the quantity $$\mathcal{C}^\mathcal{O}(\rho) \coloneqq \max_{\Phi \in \mathcal{O}}\mathrm{Tr}[( M\Phi(\rho)] - \Tr(M)/d$$ is always a valid coherence measure w.r.t. the basis $\{ \ket{i} \}$. The set of quantum maps $\mathcal{O}$ may be subtituted with either MIO or IO.
\end{theorem}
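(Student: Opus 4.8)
My plan is to treat the two assertions separately, using the first to fix the reference basis and then verifying the coherence axioms of Section~\ref{Preliminaries} for the second.

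For the existence of the basis, I would work in the eigenbasis $\{\ket{\alpha_j}\}$ of $M$, in which $M = \sum_j \lambda_j \pjct{\alpha_j}{\alpha_j}$. The key observation is that any basis mutually unbiased to the eigenbasis does the job: taking the discrete Fourier transform $\ket{i} = \frac{1}{\sqrt d}\sum_j e^{2\pi \mathrm{i}\, ij/d}\ket{\alpha_j}$, one has $|\braket{\alpha_j}{i}|^2 = 1/d$ for all $i,j$, so that $\bra{i}M\ket{i} = \frac{1}{d}\sum_j \lambda_j = \Tr(M)/d$ for every $i$. This is precisely $\bra{i}(M - \frac{\Tr M}{d}\openone)\ket{i}=0$, and since the Fourier matrix is unitary, $\{\ket{i}\}$ is a genuine orthonormal basis. (Equivalently one may invoke the Schur--Horn theorem, noting that the zero diagonal is majorized by the eigenvalues of the traceless operator $M-\frac{\Tr M}{d}\openone$.)

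For the second claim, I fix incoherence with respect to this basis and set $M_0 \coloneqq M - \frac{\Tr M}{d}\openone$, which is Hermitian, traceless, and has vanishing diagonal entries in $\{\ket{i}\}$. Because every $\Phi\in\mathcal{O}$ is trace preserving, $\Tr[M\Phi(\rho)] - \Tr(M)/d = \Tr[M_0\Phi(\rho)]$, so that $\mathcal{C}^{\mathcal{O}}(\rho) = \max_{\Phi\in\mathcal{O}}\Tr[M_0\Phi(\rho)]$. Axioms (C2a), (C2b) and (C3) are then straightforward: strong monotonicity (C2b) is exactly Theorem~\ref{thm::strongMono} applied to the observable $M_0$ (subtracting the constant $\Tr(M)/d$ is harmless since the selection probabilities sum to one); convexity (C3) follows because $\Phi$ and the trace are linear, so the objective is linear in the state and the maximum of a sum is bounded by the sum of the maxima; and ordinary monotonicity (C2a) then follows by combining the two, via $\mathcal{C}^{\mathcal{O}}(\Phi(\rho)) = \mathcal{C}^{\mathcal{O}}(\sum_n p_n \rho_n) \le \sum_n p_n \mathcal{C}^{\mathcal{O}}(\rho_n) \le \mathcal{C}^{\mathcal{O}}(\rho)$. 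Nonnegativity in (C1) is also immediate: the completely dephasing map $\Delta$ (Kraus operators $\pjct{i}{i}$) lies in $\mathrm{IO}\subseteq\mathrm{MIO}$ and sends $\rho$ to a diagonal state, so $\Tr[M_0\Delta(\rho)] = \sum_i \rho_{ii}\bra{i}M_0\ket{i}=0$ and hence $\mathcal{C}^{\mathcal{O}}(\rho)\ge 0$; moreover if $\rho$ is incoherent then $\Phi(\rho)$ is diagonal for every $\Phi\in\mathcal{O}$, and its trace against the zero-diagonal $M_0$ vanishes, giving $\mathcal{C}^{\mathcal{O}}(\rho)=0$.

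The one genuinely nontrivial direction, and the step I expect to be the main obstacle, is that $\mathcal{C}^{\mathcal{O}}(\rho) > 0$ whenever $\rho$ is coherent. Here nontriviality of $M$ (i.e. $M\not\propto\openone$, equivalently $M_0\neq 0$) must be converted into a usable operation. Since $M_0$ is nonzero with zero diagonal, it has an off-diagonal entry $\bra{k}M_0\ket{l}\neq 0$ with $k\neq l$, and coherence of $\rho$ supplies an off-diagonal entry $\rho_{ij}=\bra{i}\rho\ket{j}\neq 0$ with $i\neq j$. I would then exhibit an explicit incoherent operation that routes this coherence onto the support of $M_0$: complete the single incoherent Kraus operator $K=\pjct{k}{i}+e^{\mathrm{i}\theta}\pjct{l}{j}$ to a valid instrument by adjoining $\{\pjct{m}{m}: m\neq i,j\}$, so that $\sum K^\dagger K = \openone$ and every Kraus operator maps $\mathcal{I}$ into $\mathcal{I}$ (hence the channel is in $\mathrm{IO}\subseteq\mathrm{MIO}$). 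A direct computation using $\bra{k}M_0\ket{k}=\bra{l}M_0\ket{l}=0$ collapses the objective to $\Tr[M_0\Phi(\rho)] = 2\,\mathrm{Re}\!\left(e^{\mathrm{i}\theta}\bra{k}M_0\ket{l}\,\overline{\rho_{ij}}\right)$, so choosing the phase $\theta$ to align the two complex numbers yields $\mathcal{C}^{\mathcal{O}}(\rho) \ge 2|\bra{k}M_0\ket{l}|\,|\rho_{ij}| > 0$. The delicate points to get right are that this construction genuinely stays inside $\mathrm{IO}$ (so it is admissible for both choices of $\mathcal{O}$) and that the cross terms from the discarded diagonal Kraus operators do not contribute, which is exactly where the vanishing diagonal of $M_0$ secured in the first part is used.
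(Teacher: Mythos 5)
Your proof is correct, and its overall skeleton matches the paper's: strong monotonicity is delegated to Theorem~\ref{thm::strongMono}, convexity follows from linearity of the objective under the maximization, vanishing on incoherent states follows from the zero diagonal of $M_0$, and faithfulness is established by exhibiting an explicit incoherent operation that makes the mean value strictly positive. Two departures are worth recording. First, for the existence of the basis the paper invokes the Schur--Horn theorem (the zero vector is majorized by the spectrum of the traceless $M-\frac{\Tr M}{d}\openone$, so a basis with vanishing diagonal exists), whereas you construct the basis explicitly as the Fourier transform of the eigenbasis; this is exactly the content of the paper's Proposition~\ref{thm::mub}, which the paper proves separately rather than using inside Theorem~\ref{thm::measure}. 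Your route is constructive (any mutually unbiased basis works, which is experimentally meaningful), while Schur--Horn is non-constructive but shows more, namely that any diagonal majorized by the spectrum is attainable. Second, for faithfulness the paper reduces to a $2\times 2$ block through a chain of ``without loss of generality'' moves---relabel so the nonzero entry $r$ of $M'$ sits at position $(0,1)$, permute $\rho$ so its coherence sits there too, project onto $\mathrm{span}\{\ket{0},\ket{1}\}$ while dephasing the complement, then fix the phase with an incoherent unitary---whereas you compress all of this into the single incoherent Kraus operator $K=\pjct{k}{i}+e^{\mathrm{i}\theta}\pjct{l}{j}$ completed by projectors. The two constructions are morally identical (both route the coherence of $\rho$ at $(i,j)$ onto the support of $M_0$ at $(k,l)$ and tune a phase), but yours is tighter: it handles overlapping index sets $\{i,j\}\cap\{k,l\}\neq\emptyset$ automatically, avoids the bookkeeping of separate permutation/projection/dephasing steps, and yields the quantitative bound $\mathcal{C}^{\mathcal{O}}(\rho)\geq 2\abs{\bra{k}M_0\ket{l}}\abs{\rho_{ij}}$. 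You also make explicit two points the paper leaves implicit: nonnegativity via the completely dephasing map, and (C2a). On the latter, note that deriving (C2a) from (C2b) and (C3) only covers maps admitting an incoherent Kraus decomposition (IO); for monotonicity under a general MIO map one should instead use closure under composition, $\max_{\Psi\in\mathrm{MIO}}\Tr[M_0\Psi(\Phi(\rho))]\leq\max_{\Theta\in\mathrm{MIO}}\Tr[M_0\Theta(\rho)]$ since $\Psi\circ\Phi\in\mathrm{MIO}$---a one-line fix, and a point the paper itself does not address at all.
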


\begin{proof}
We begin by observing that the matrix $M' = M - \frac{\Tr{M}}{d}\openone$ is trace zero. Since $M'$ is nontrivial, this implies that the sum of its positive eigenvalues and negative eigenvalues must be exactly equal. Let $\vec{\lambda} = (\lambda_1, \ldots , \lambda_d)$ be the vector of eigenvalues of $M'$ arranged in decreasing order. We recall the Schur-Horn theorem, which states that for every vector $\vec{v} = (v_1, \ldots, v_d)$, there exists a Hermitian matrix with the same vector of eigenvalues $\vec{\lambda}$, but with diagonal entries $\vec{v} = (v_1, \ldots, v_d)$ so long as the vectors satisfy the majorization condition $  \vec{v}  \prec \vec{\lambda} $. It is clear that the zero vector $\vec{v} = (0, \ldots, 0)$ always satisfies this condition. Therefore, there always exist a basis $\{ \ket{i} \}$ for $M'$ where the main diagonals are all zero, such that $\bra{i}M'\ket{i} = 0$ for every $\ket{i}$, which proves the first part of the theorem.

Now, we proceed to prove that $\mathcal{C}^{\mathcal{O}}_M(\rho)$ is a coherence measure of with respect to the basis $\{ \ket{i} \}$. The strong monotonicity condition is already satisfied due to Thm~\ref{thm::strongMono}. The convexity of the measure is immediate from the linearity of the trace operation and the definition of $\mathcal{C}^\mathcal{O}_M$ as a maximization over MIO or IO. Therefore, we only need to establish the faithfulness property of the measure.

In order to prove this, recall that in the basis $\{ \ket{i} \}$, the diagonal elements of $M'$ is all zero. Therefore, there always exists some projection onto a 2 dimensional space $M'$ such that the corresponding submatrix has the form $\begin{pmatrix}
    0 & r  \\
     r*  & 0
\end{pmatrix}$. We can assume without loss of generality that the projection is onto the subspace $\{ \ket{0},\ket{1} \}$, since at this point, the numerical labelling of the basis is arbitary.

For some coherent quantum state $\rho$, there is at least one nonzero off-diagonal element. Since basis permutation is an incoherent operation, we can assume the nonzero off-diagonal element is $\rho_{01}$. In fact, we can assume that it is the only nonzero off diagonal element as we ca freely  project onto the subspace spanned by $\{ \ket{0},\ket{1} \}$ and completely dephase the rest of the Hilbert space via an incoherent operation, which allows us to prove the general result by only considering the 2 dimensional case. Suppose this leads to a 2 dimensional submatrix of the form $\begin{pmatrix}
    p_1 & a  \\
     a^*  & p_2
\end{pmatrix}$ where $a$ is nonzero since $\rho$ is coherent. 

%We can always perform an incoherent mixing procedure that performs $\begin{pmatrix}
%    p_1 & a  \\
%     a^*  & p_2
%\end{pmatrix} \rightarrow \frac{1}{2} \begin{pmatrix}
%    p_1 & a  \\
%     a^*  & p_2
%\end{pmatrix} + \frac{1}{2}\begin{pmatrix}
%    p_2 & a^*  \\
%     a  & p_1
%\end{pmatrix}$ so that the final state always has the form $\begin{pmatrix}
%    p & b  \\
%     b  & p
%\end{pmatrix}$, where $b$ is real and nonzero, but can be positive or negative, so we will assume the state always has this form. 

Directly computing $\Tr\begin{pmatrix}
    0 & r  \\
     r^*  & 0
\end{pmatrix}\begin{pmatrix}
    p_1 & a  \\
     a^*  & p_2
\end{pmatrix}$, we get the expression $r^*a+a^*r = \abs{ra}(e^{i\phi} + e^{-i\phi}) $. This final quantity  can always be made positive by performing the incoherent unitary that performs $\ket{0} \rightarrow \ket{0}$ and  $\ket{1} \rightarrow e^{-i\phi}\ket{1}$ which is equivalent to making both $a$ and $r$ positive quantities. Since $r$ is strictly positive as $M'$ is a nontrivial matrix, this implies $ar > 0$ if $\rho$ is a coherent state, so there always exists at least one incoherent operation $\Phi$ such that $\Tr[M'\Phi(\rho)] > 0$ for every coherent state $\rho$. 

Finally, we just observe that $M'$ has zero diagonal elements w.r.t. the basis $\{ \ket{i}\}$, so $\Tr[M'\Phi(\rho)]=0$ whenever $\rho$ is incoherent and $\Phi$ is MIO or IO. This completes the proof. 
\end{proof}

Theorem~\ref{thm::measure} above establishes several facts. First, observe that since $\mathcal{C}^\mathcal{O}(\rho)$ is a coherence measure and nonnegative, $\mathrm{Tr}[( M\rho)] - \Tr(M)/d$ can only be positive when $\rho$ is coherent (the basis is specified by the theorem). This establishes that every nontrivial observable $M$ is in fact a witness of \textit{some} form of coherence. One just needs to subtract the constant $\Tr(M)/d$ from the mean value $\langle M \rangle$. 

Second, it establishes that if $M$ is a coherence witness, then it can be interpreted as the lower bound of the \textit{bona fide} coherence measure $\mathcal{C}^\mathcal{O}_M$. Recall that the measure $\mathcal{C}^\mathcal{O}_M$ quantifies the operational usefulness of a quantum state when one considers MIO or IO type quantum operations and the task is to maximize a given  observable $M$. Other examples of coherence measures with operational interpretations in terms of MIO or IO include the relative entropy of coherence, which quantifies the number of maximally coherent qubits you can distill using IO, as quantities considering how much  entanglement and Fisher information can be extracted via MIO or IO.

Third, Theorem~\ref{thm::measure} exactly specifies the preferred basis that is useful for optimizing $M$ and that such a basis always exists. The following proposition that coherence within any mutually unbiased bases will always satisfy the necessary condition.

\begin{proposition} \label{thm::mub}
Let $\{ \ket{\alpha_i} \}$ be the complete set of eigenbases of some nontrivial quantum observable M, and let $\{ \ket{\beta_i} \}$ be any complete basis that is mutually unbiased w.r.t. $\{ \ket{\alpha_i} \}$. Then the basis $\{ \ket{\beta_i} \}$ always satisfies $\bra{\beta_i}(M - \frac{\Tr{M}}{d}\openone)\ket{\beta_i} = 0$ for every $\ket{\beta_i}$. 

In other words, w.r.t. any mutually unbiased basis $\ket{\beta_i}$, the diagonal elements of $M - \frac{\Tr{M}}{d}\openone$ is always zero.
\end{proposition}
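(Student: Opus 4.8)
The plan is to reduce the statement to a one-line computation that exploits the defining property of mutual unbiasedness. First I would diagonalize the observable in its own eigenbasis, writing $M = \sum_j \lambda_j \ketbra{\alpha_j}{\alpha_j}$ with real eigenvalues $\lambda_j$, so that $\Tr(M) = \sum_j \lambda_j$. The goal is then to evaluate the diagonal matrix element $\bra{\beta_i} M \ket{\beta_i}$ and show it equals $\Tr(M)/d$ for every $i$, since this is exactly what makes the diagonal of $M - \frac{\Tr M}{d}\openone$ vanish.

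Next I would expand the diagonal element in the eigenbasis of $M$ to obtain
\begin{equation*}
\bra{\beta_i} M \ket{\beta_i} = \sum_j \lambda_j \abs{\braket{\alpha_j}{\beta_i}}^2 .
\end{equation*}
At this point the only structural input needed is the mutual unbiasedness condition, namely that $\abs{\braket{\alpha_j}{\beta_i}}^2 = 1/d$ for all $i$ and $j$. Substituting this in, the weights become uniform and the sum collapses to $\frac{1}{d}\sum_j \lambda_j = \Tr(M)/d$, which is manifestly independent of the choice of $i$.

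Finally I would subtract the constant term, using $\bra{\beta_i}\openone\ket{\beta_i} = 1$, to get
\begin{equation*}
\bra{\beta_i}\left(M - \tfrac{\Tr M}{d}\openone\right)\ket{\beta_i} = \tfrac{\Tr M}{d} - \tfrac{\Tr M}{d} = 0,
\end{equation*}
establishing the claim for every $\ket{\beta_i}$. I do not anticipate a genuine obstacle here: the result is an immediate consequence of the uniform-overlap property, and the argument is really just a verification that mutual unbiasedness forces every diagonal element in the $\beta$-basis to collapse onto the average eigenvalue $\Tr(M)/d$. The only point deserving a word of care is that the condition must hold simultaneously for \emph{all} pairs $(\alpha_j,\beta_i)$, which is precisely what the mutual unbiasedness hypothesis supplies, and that $M$ being Hermitian guarantees the $\lambda_j$ are real so the weighted sum is well defined.
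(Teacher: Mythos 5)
Your proposal is correct and follows essentially the same route as the paper's proof: diagonalize $M$ in its eigenbasis, use the mutual-unbiasedness condition $\abs{\braket{\alpha_j}{\beta_i}}^2 = 1/d$ to show $\bra{\beta_i}M\ket{\beta_i} = \Tr(M)/d$, and subtract the constant. Nothing is missing; the extra remarks about reality of the eigenvalues are fine but not needed beyond what the paper already assumes.
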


\begin{proof}
Let the dimension of the Hilbert space be $d$. We then have $\abs{\bra{\beta_i}\alpha_j\rangle}^2 = \frac{1}{d}$. Since $\{ \ket{\alpha_i} \}$ is the complete eigenbasis of $M$, $M= \sum_{i} \lambda_i \ket{\alpha_i}\bra{\alpha_i}$ and $\bra{\beta_i}M\ket{\beta_i} = \sum_j \frac{\lambda_j}{d} = \frac{\Tr{M}}{d}$. This implies that $\bra{\beta_i}(M - \frac{\Tr{M}}{d}\openone)\ket{\beta_i} = 0$ for every $i = 1, \ldots, d$, which is the required condition.
\end{proof}

\section{A semidefinite program for computing coherence measures}
\label{sdp}

Previously, we have considered both MIO and IO during the construction of our coherence measures. Here, we show that for MIOs, the corresponding coherence measure $\mathcal{C}^{\mathrm{MIO}}_M$ is in fact, efficiently computable via a semidefinite program.

Let us first define the matrix $A \coloneqq M_A \otimes \rho_B^T \otimes \ket{1}_C\bra{1} $ acting on $\mathcal{H}_A \otimes \mathcal{H}_B \otimes \mathcal{H}_C \otimes \mathcal{H}_D$. Furthermore, we will assume that $\mathrm{dim}(\mathcal{H}_A) = \mathrm{dim}(\mathcal{H}_B) = \mathrm{dim}(\mathcal{H}) = d$ and $\mathrm{dim}(\mathcal{H}_D) = 2$.

We now prove the following:

\begin{theorem} \label{thm::primal}
For any quantum observable M, the optimization problem $$\max_{\Phi \in \mathrm{MIO}} \Tr(M\Phi(\rho))$$ is equivalent to the semidefinite program

\begin{equation*}
\begin{aligned}
& \underset{X \geq 0}{\text{max}}
& & \Tr(AX) \\
& \text{subject to}
& & \Tr_{AC}(X\ket{1}_C\bra{1}) = \openone_B \\
& & &\Tr_{BC}(X \openone_A\otimes \ket{i}_B\bra{i} \otimes \ket{1}_C\bra{1} ) \\
&&&=  \sum_{j=1}^d\Tr_{ABC}(X \ket{j}_A\bra{j} \otimes \ket{i}_B\bra{i} \otimes \ket{2}_C\bra{2} ) \ket{j}_A\bra{j}\\
& & & \forall \; i = 1, \ldots, d, 
\end{aligned}
\end{equation*} where $A \coloneqq M_A \otimes \rho_B^T \otimes \ket{1}_C\bra{1} $. 
\end{theorem}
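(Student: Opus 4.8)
The plan is to realize the optimization over maximally incoherent channels as an optimization over Choi--Jamiolkowski operators, and then to recognize trace preservation and maximal incoherence as the two affine constraints of the stated program. First I would invoke the recovery formula $\Phi(\rho) = \Tr_B[J(\Phi)(\openone_A \otimes \rho_B^T)]$ from the Preliminaries to rewrite the objective as $\Tr(M\Phi(\rho)) = \Tr_{AB}[(M_A \otimes \rho_B^T)\,J(\Phi)]$, using cyclicity of the trace. Identifying the variable $X$ with an embedding of $J(\Phi)$ in the $\ket{1}_C$ sector of the register $C$, this is exactly $\Tr(AX)$ for $A = M_A \otimes \rho_B^T \otimes \ket{1}_C\bra{1}$, so the two objective functions coincide once the variables are placed in correspondence.

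Next I would translate the defining properties of a MIO channel into the constraints. Complete positivity of $\Phi$ is equivalent, by Choi's theorem, to $J(\Phi) \geq 0$, which is the semidefinite constraint $X \geq 0$. Trace preservation $\Tr_A[J(\Phi)] = \openone_B$ becomes the first equality after restricting to the relevant sector, namely $\Tr_{AC}(X \ket{1}_C\bra{1}) = \openone_B$. The substantive condition is maximal incoherence: $\Phi \in \mathrm{MIO}$ iff it sends every diagonal input to a diagonal output, i.e. $\Phi(\ket{i}\bra{i})$ is diagonal for each $i$. Using the recovery formula again, $\Phi(\ket{i}\bra{i}) = \Tr_B[J(\Phi)(\openone_A \otimes \ket{i}_B\bra{i})]$, which is precisely the left-hand side of the second constraint. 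I would then observe that its right-hand side $\sum_j [\,\cdot\,]\ket{j}_A\bra{j}$ is manifestly diagonal in the preferred basis, so imposing the equality forces $\Phi(\ket{i}\bra{i})$ to be diagonal; the auxiliary $\ket{2}_C$ sector is the bookkeeping device that writes the extracted diagonal as a Hermiticity-preserving linear image of $X$, so the whole condition takes the admissible form $\phi_i(X)=0$.

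Finally I would close the equivalence by arguing both inclusions explicitly. Given any $\Phi \in \mathrm{MIO}$, I build a feasible $X$ by placing $J(\Phi)$ in the $\ket{1}_C$ block and the matching diagonal data in the $\ket{2}_C$ block; positivity of $X$ follows from $J(\Phi)\geq 0$, the constraints hold by trace preservation and maximal incoherence, and $\Tr(AX) = \Tr(M\Phi(\rho))$. Conversely, given any feasible $X$, I set $J \coloneqq \Tr_C(X\ket{1}_C\bra{1})$ (tracing out any spectator ancilla); positivity of $X$ gives $J \geq 0$, the first constraint gives $\Tr_A J = \openone_B$ so $J$ is the Choi operator of a genuine CPTP map $\Phi$, and the second constraint forces each $\Phi(\ket{i}\bra{i})$ diagonal, hence $\Phi \in \mathrm{MIO}$, with objectives again equal. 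Since feasible points correspond with equal objective value in both directions, the two optima agree.

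I expect the main obstacle to be the faithful encoding of the maximal-incoherence constraint: one must check that the two-sector construction is both sound (every feasible $X$ yields a bona fide MIO channel) and complete (no MIO channel is excluded), and in particular that the auxiliary sector does not enlarge the feasible set with operators that fail to arise from a legitimate channel. Verifying that compressing a positive $X$ to the $\ket{1}_C$ block preserves positivity, and that tracing out the ancillary register leaves a valid Choi operator, is the delicate bookkeeping that underpins the equivalence.
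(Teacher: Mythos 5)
Your proposal is correct and follows essentially the same route as the paper: identify the $\ket{1}_C$ block of $X$ with the Choi--Jamiolkowski operator $J(\Phi)$, read off trace preservation from the first constraint, and observe that the second constraint equates $\Phi(\ket{i}\bra{i})$ with a manifestly diagonal operator whose coefficients live in the $\ket{2}_C$ block (and are nonnegative because they are diagonal entries of a principal submatrix of the positive $X$). If anything, you are slightly more thorough than the paper, which only spells out the direction ``feasible $X$ $\Rightarrow$ MIO channel with equal objective'' and leaves the converse embedding of an arbitrary MIO channel into a feasible block-diagonal $X$ implicit.
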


\begin{proof}
We begin by first noting that the matrix $X$ can be written as the matrix $$\begin{pmatrix}
    X_1 & *  \\
     *  & X_2
\end{pmatrix}. $$ The $*$ indicates possible nonzero elements, but they do not appear in the objective function we are trying to optimize, nor do they appear within the linear constraints, so they can be arbitrary so long as $X\geq 0$. The matrix $A$ written in matrix form looks like $$\begin{pmatrix}
    M_A \otimes \rho_B^T & 0  \\
     0  & 0
\end{pmatrix}. $$ Computing $\Tr(AX)$, we get $$\Tr(AX) = \Tr_{A}[\Tr_B(X_1 \openone_{A} \otimes \rho_B^T) M_A ]. $$

Now, the constraint $\Tr_{AC}(X\ket{1}_C\bra{1}) = \openone_B$ implies $\Tr_{A}(X_1) = \openone_B$, so $X_1$ actually represents a valid quantum operation in the Choi-Jamiolkowski representation. This implies $\Tr(AX)$ has the form $\Tr_A[\Phi(\rho) M_A ]$ for some valid quantum operation $\Phi$.

All that remains is for us to prove that under the set of constraints \begin{align*}
\Tr_{BC}(X \openone_A \otimes \ket{i}_B\bra{i} \otimes & \ket{1}_C\bra{1}  ) \\
&=  \sum_{j}\Tr_{ABC}(X \ket{j}_A\bra{j} \otimes \ket{i}_B\bra{i}  \otimes \ket{2}_C\bra{2} ) \ket{j}_A\bra{j}
\end{align*} for all  $i = 1, \ldots, d$ and  $j = 1, \ldots, d$, $\Phi$ must be a maximally incoherent operation. We first note that the number $\Tr_{ABC}(X \ket{j}_A\bra{j} \otimes \ket{i}_B\bra{i}  \otimes \ket{2}_C\bra{2} )$ is just the main diagonal elements of the matrix $X_2$, so it must be nonnegative since $X$ is positive and $X_2$ is a principle submatrix of $X$. We can therefore rewrite the constraint as $\Tr_{BC}(X \openone_A \otimes \ket{i}_B\bra{i} \otimes  \ket{1}_C\bra{1}  )
=  \sum_j \lambda_{i,j} \ket{j}_A\bra{j}$ where $\lambda_{i,j} $ is nonnegative. This necessarily means that every incoherent state $\ket{i}\bra{i}$ is mapped to a diagonal state $\sum_j \lambda_{i,j} \ket{j}\bra{j}$ under the quantum map represented by $X_1$, which defines maximally incoherent operations, and completes the proof.

\end{proof}

Given the primal problem in Theorem~\ref{thm::primal}, we can also write down the dual problem, which is detailed in the following corollary:

\begin{corollary}
The dual to the primal problem in Theorem~\ref{thm::primal} is the following optimization over all possible Hermitian $Y_A$ and $Y_B$:

\begin{equation*}
\begin{aligned}
& \underset{Y_B = Y_B^\dag}{\text{min}}
& &  \Tr( Y_B) \\
& \text{subject to}
& & \openone_{A} \otimes Y_B + Y_A\otimes \openone_B \geq M_A \otimes \rho_B^T \\
& & &   \bra{j}Y_A \ket{j}_A \leq 0, \; \forall j= 1, \ldots, d\\
& & &  \\
\end{aligned}
\end{equation*}

Furthermore, the optimal primal value is equal to the optimal dual value.                                                                                                                                       
\end{corollary}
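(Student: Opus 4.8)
The plan is to obtain the dual directly from the general primal--dual template recorded in the Preliminaries: I read each primal constraint as a Hermiticity-preserving map $\phi_i$ with target $B_i$, compute its adjoint via $\Tr[C^\dag\phi_i(D)]=\Tr[\phi_i^*(C)^\dag D]$, and then write the dual as the minimization of $\sum_i\Tr(B_iY_i)$ subject to $\sum_i\phi_i^*(Y_i)\geq A$. There are two families of constraints. The trace-preservation constraint $\Tr_{AC}(X\ket{1}_C\bra{1})=\openone_B$ carries a single dual variable $Y_B$ on $\mathcal{H}_B$ and contributes $\Tr(\openone_B Y_B)=\Tr(Y_B)$; the family indexed by $i$ that forces $X_1$ to be maximally incoherent has dual variables on $\mathcal{H}_A$, which I will consolidate into the single matrix $Y_A$ of the statement and which contribute nothing to the objective since their targets are zero. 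This already explains why only $\Tr(Y_B)$ appears.

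First I would compute the two adjoints. The trace-preservation map gives $\phi_1^*(Y_B)=\openone_A\otimes Y_B\otimes\ket{1}_C\bra{1}$, supported on the $\ket{1}_C\bra{1}$ block. For the incoherence family the adjoint splits across the flag register $C$: the input side contributes $Y_A\otimes\ket{i}_B\bra{i}\otimes\ket{1}_C\bra{1}$, whereas the right-hand side of the primal constraint couples $X$ only to the diagonal projectors $\ket{j}_A\bra{j}$, so the $\ket{2}_C\bra{2}$ block sees only the diagonal part $\sum_j\bra{j}Y_A\ket{j}\,\ket{j}_A\bra{j}$ of the $\mathcal{H}_A$ variable. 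Since $A=M_A\otimes\rho_B^T\otimes\ket{1}_C\bra{1}$ and every term produced above is block diagonal in $C$, the single inequality $\sum_i\phi_i^*(Y_i)\geq A$ decouples into two. The $\ket{1}_C\bra{1}$ block yields $\openone_A\otimes Y_B+Y_A\otimes\openone_B\geq M_A\otimes\rho_B^T$ after using $\sum_i Y_A\otimes\ket{i}_B\bra{i}=Y_A\otimes\openone_B$, and the $\ket{2}_C\bra{2}$ block yields $-\sum_j\bra{j}Y_A\ket{j}\,\ket{j}_A\bra{j}\geq 0$, i.e. $\bra{j}Y_A\ket{j}_A\leq0$ for every $j$. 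These are exactly the two dual constraints in the statement.

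For the equality of the optimal values I would invoke Slater's theorem. I must exhibit a primal-feasible $X$ (for example the Choi matrix of the full dephasing channel, padded with a diagonal $X_2$ carrying the required nonnegative numbers, which is manifestly maximally incoherent) and a strictly dual-feasible point; the latter is easy, since $Y_A=-t\openone_A$ with $t>0$ makes every diagonal $\bra{j}Y_A\ket{j}=-t$ strictly negative, and $Y_B=s\openone_B$ with $s$ large makes $\openone_A\otimes Y_B+Y_A\otimes\openone_B>M_A\otimes\rho_B^T$ strict. A clean independent check that the derived constraints are the right ones is the direct weak-duality estimate: for any maximally incoherent $\Phi$ with Choi matrix $X_1$, one has $\Tr(M\Phi(\rho))=\Tr_{AB}[(M_A\otimes\rho_B^T)X_1]\leq\Tr_{AB}[(\openone_A\otimes Y_B+Y_A\otimes\openone_B)X_1]=\Tr(Y_B)+\Tr_A[Y_A\,\Phi(\openone)]$, and because $\Phi(\openone)$ is diagonal for any maximally incoherent map the last term equals $\sum_j q_j\bra{j}Y_A\ket{j}\leq0$, so $\Tr(M\Phi(\rho))\leq\Tr(Y_B)$; this pins down precisely why the constraint on $Y_A$ is that its diagonal be nonpositive.

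The main obstacle I anticipate is not conceptual but bookkeeping together with one genuine subtlety. The bookkeeping is tracking which flag-register block each adjoint term lands in. The genuine subtlety is justifying the consolidation of the per-index dual variables $\{Y_A^{(i)}\}$ arising from the family of equality constraints into the single $Y_A$ of the statement: taking them equal is a restriction of the template dual, so one must argue it does not raise the optimum. I would settle this by combining the weak-duality bound above, which already uses a single $Y_A$ and shows $\max_\Phi\Tr(M\Phi(\rho))\leq\Tr(Y_B)$ for every feasible $(Y_A,Y_B)$, with Slater's theorem applied to the primal and the template dual, and then verifying through complementary slackness that an optimal dual point may indeed be chosen with all $Y_A^{(i)}$ equal. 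Carefully checking the Slater strict-feasibility conditions is the other place where I would proceed slowly.
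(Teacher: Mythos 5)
Your derivation follows the paper's own route exactly: the same template dualization, the same adjoint maps $\phi^*(Y_B)=\openone_A\otimes Y_B\otimes\ket{1}_C\bra{1}$ and $\phi_i^*$ splitting across the flag register, and the same appeal to Slater (your strictly feasible point $Y_A=-t\openone_A$, $Y_B=s\openone_B$ is as good as the paper's $Y_A^i=0$, $Y_B=x\openone_B$ with $x>\lambda_{\max}(M_A\otimes\rho_B^T)$). Where you go beyond the paper is that you spotted the step it performs silently: the template assigns an \emph{independent} dual variable $Y_A^{i}$ to each of the $d$ equality constraints, so what dualization actually produces is the constraint $\openone_A\otimes Y_B+\sum_i Y_A^{i}\otimes\ket{i}_B\bra{i}\geq M_A\otimes\rho_B^T$ with $\bra{j}Y_A^{i}\ket{j}\leq 0$ for all $i,j$. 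The paper derives precisely this and then simply asserts it ``is the form that was presented in the corollary,'' i.e.\ it replaces $\sum_i Y_A^{i}\otimes\ket{i}_B\bra{i}$ by $Y_A\otimes\openone_B$ without justification. You correctly identified this consolidation as the genuine subtlety.

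However, your proposed repair --- use complementary slackness to show that an optimal dual point can be chosen with all $Y_A^{(i)}$ equal --- cannot work, because that claim is false; the single-$Y_A$ program is genuinely a strict restriction. Take $d=2$, $M=\sigma_x$, $\rho=\ket{0}\bra{0}$. The primal value is $0$: every MIO maps the incoherent state $\ket{0}\bra{0}$ to a diagonal state, and $\sigma_x$ has zero diagonal. The per-index dual attains $0$, e.g.\ $Y_B=0$, $Y_A^{0}=\sigma_x$, $Y_A^{1}=0$ (feasible: all diagonal entries vanish, and with $Y_B=0$ the constraint decouples in the $B$ basis into $Y_A^0\geq\sigma_x$ and $Y_A^1\geq 0$), consistent with Slater. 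But every feasible point of the single-$Y_A$ program has $\Tr(Y_B)\geq 1$: writing $a=\bra{0}Y_A\ket{0}\leq 0$, $c=\bra{1}Y_A\ket{1}\leq 0$, $b=\bra{0}Y_A\ket{1}$, positivity of the principal submatrix on $\mathrm{span}\{\ket{0}_A\ket{0}_B,\ket{1}_A\ket{0}_B\}$ gives $\abs{b-1}^2\leq(\bra{0}Y_B\ket{0}+a)(\bra{0}Y_B\ket{0}+c)\leq\bra{0}Y_B\ket{0}^2$, and the submatrix on $\mathrm{span}\{\ket{0}_A\ket{1}_B,\ket{1}_A\ket{1}_B\}$ gives $\abs{b}\leq\bra{1}Y_B\ket{1}$, whence $\Tr(Y_B)\geq\abs{1-b}+\abs{b}\geq 1$ (and the value $1$ is attained by $Y_A=0$, $Y_B=\ket{0}\bra{0}$). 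So the statement's dual equals $1$ while the primal equals $0$: no optimal dual point has all $Y_A^{(i)}$ equal, and neither complementary slackness nor your (correct, but one-sided) weak-duality bound can close this gap --- weak duality only shows the single-$Y_A$ program is an upper bound, not that it is tight. The defensible statement, and the one the paper's computation actually proves, keeps the block structure $\sum_i Y_A^{i}\otimes\ket{i}_B\bra{i}$; the corollary as literally written (which is what you set out to prove) requires that correction.
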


\begin{proof}
The first constraint in the primal problem can be written as $ \phi(X) \coloneqq \Tr_{AC}(X\ket{1}_C\bra{1}) = \openone_B$. The conjugate map can be verified to be the map $\phi^*(Y_B) = \openone_{A} \otimes Y_B \otimes \ket{1}_C\bra{1}$, since it satisfies $\Tr[Y_B \phi(X)] = \Tr[\phi^*(Y_B)X]$.

The rest of the constraints can be written as $$\begin{aligned} \phi_{i}(X) \coloneqq &\Tr_{BC}(X \openone_A \otimes \ket{i}_B\bra{i} \otimes  \ket{1}_C\bra{1}  ) \\
&-  \sum_{j}\Tr_{ABC}(X \ket{j}_A\bra{j} \otimes \ket{i}_B\bra{i}  \otimes \ket{2}_C\bra{2} ) \ket{j}_A\bra{j} = 0. \end{aligned}$$ In this case the conjugate map is  $$\begin{aligned} \phi^*_{i}(Y_A^{i}) \coloneqq & Y_A^{i}\otimes \ket{i}_B\bra{i}  \otimes \ket{1}_C\bra{1}\\ &- \sum_j\bra{j}Y_A^{i} \ket{j}_A\ket{j}_A\bra{j} \otimes \ket{i}_B\bra{i} \otimes \ket{2}_C\bra{2}. \end{aligned}$$

Summing over the variable $i$, we have $$\begin{aligned} \sum_i\phi^*_{i}(Y_A^{i}) \coloneqq & \sum_i Y^i_A\otimes \ket{i}_B\bra{i} \otimes  \ket{1}_C\bra{1}\\ &- \sum_{i,j} \bra{j}Y^i_A \ket{j}_A\ket{j}_A\bra{j} \otimes \ket{i}_B\bra{i} \otimes \ket{2}_C\bra{2}. \end{aligned}$$.

The dual program can therefore be written as:

\begin{equation*}
\begin{aligned}
& \underset{Y_B = Y_B^\dag}{\text{min}}
& &  \Tr( Y_B) \\
& \text{subject to}
& & \openone_{A} \otimes Y_B \otimes \ket{1}_C\bra{1} + Y_A\otimes \openone_B \otimes  \ket{1}_C\bra{1} - \\
& & &  \sum_i Y^i_A\otimes \ket{i}_B\bra{i} \otimes  \ket{1}_C\bra{1} - \\
& & &  \sum_{i,j} \bra{j}Y^i_A \ket{j}_A\ket{j}_A\bra{j} \otimes \ket{i}_B\bra{i} \otimes \ket{2}_C\bra{2}\\
& & & \geq M_A \otimes \rho_B^T \otimes \ket{1}_C\bra{1} \\
\end{aligned}
\end{equation*}

The third line of the constraint is actually just $- \sum_{i,j} \bra{j}Y^i_A \ket{j}_A\ket{j}_A\bra{j} \otimes \ket{i}_B\bra{i} \otimes \ket{2}_C\bra{2} \geq 0$, which is equivalent to the contraint that the main diagonal of $Y^i_A$ is all negative. As such, the program can be further simplified to the following:

\begin{equation*}
\begin{aligned}
& \underset{Y_B = Y_B^\dag}{\text{min}}
& &  \Tr( Y_B) \\
& \text{subject to}
& & \openone_{A} \otimes Y_B + \sum_i Y^i_A\otimes \ket{i}_B\bra{i} \geq M_A \otimes \rho_B^T \\
& & &   \bra{j}Y^i_A \ket{j}_A \leq 0, \; \forall j= 1, \ldots, d\\
& & &  \\
\end{aligned}
\end{equation*}

which is the form that was presented in the corollary. Finally, we just need to check that the primal and dual programs satisfies Slater's conditions. For the primal problem, the optimization is over all MIO's, so the primal feasible set is nonempty (for instance, we can just consider the Choi-Jamiolkowski representation of the identity operation, which also falls under MIO). Furthermore, there exists at least one set of $Y^i_A$ and $Y_B$ s.t. $\openone_{A} \otimes Y_B + \sum_i Y^i_A\otimes \ket{i}_B\bra{i} > M_A \otimes \rho_B^T$ since one can always set $Y^i_A = 0$, and $Y_B = x \openone_B$ where $x > \lambda_{\max}(M_A \otimes \rho_B^T)$ and $\lambda_{\max}(A)$ represents the largest eigenvalue of $A$. As such, Slater's conditions are satisfied and the primal optimal value is equal to the dual optimal value.

\end{proof}

\section{Relation to robustness and $l_1$ norm of coherence}

It was observed in~\cite{Napoli2016} that the robustness of coherence $\mathcal{C}_\mathcal{R}$, which may be interpreted as the minimal amount of quantum noise that can be added to a system before it becomes incoherent, is a coherence measure that is also simultaneously an observable. That is, for any state $\rho$, there always exists some optimal witness $W_\rho$ such that  $\Tr( W_\rho \rho) = \mathcal{C}_\mathcal{R}(\rho)$. It was also demonstrated that the $l_1$ norm upper bounds the robustness, so $\mathcal{C}_\mathcal{R}(\rho) \leq \mathcal{C}_{l_1}(\rho)$. The following theorem shows that both the robustness and the $l_1$ norms of coherence are fundamental upper bounds of  $\mathcal{C}^\mathcal{O}_M$. We note that in \cite{Ren2018}, it was also observed that when $M$ is a witness that achieves its maximum value for the maximally coherent state, then $\mathcal{C}^{\mathrm{IO}}_M$ is upper bounded by the $l_1$ norm of coherence under certain normalization conditions.

\begin{theorem} [Hierarchy of coherence measures] \label{thm::hierarchy}
For any given state $\rho$ and observable $M$, the following hierarchy of the coherence measures holds: $$\mathcal{C}^{\mathrm{IO}}_M(\rho) \leq \mathcal{C}^{\mathrm{MIO}}_M(\rho) \leq \mathcal{N}_M \mathcal{C}_\mathcal{R}(\rho) \leq \mathcal{N}_M\mathcal{C}_{l_1}(\rho) $$ where $\mathcal{N}_M \coloneqq \abs{\lambda_\mathrm{min}(M)- \frac{\Tr M}{d}}$ and $\lambda_\mathrm{min}(M)$  is the smallest eigenvalue of the observable $M$. Furthermore, all the inequalities are tight.
\end{theorem}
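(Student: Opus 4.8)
The plan is to treat the three links separately, working throughout with the traceless observable $M'\coloneqq M-\frac{\Tr M}{d}\openone$. Because every $\Phi\in\mathcal{O}$ is trace preserving, $\Tr[M'\Phi(\rho)]=\Tr[M\Phi(\rho)]-\Tr(M)/d$, so $\mathcal{C}^{\mathcal{O}}_M(\rho)=\max_{\Phi\in\mathcal{O}}\Tr[M'\Phi(\rho)]$. The first inequality $\mathcal{C}^{\mathrm{IO}}_M(\rho)\le\mathcal{C}^{\mathrm{MIO}}_M(\rho)$ is then immediate, since $\mathrm{IO}\subset\mathrm{MIO}$ means the maximum defining $\mathcal{C}^{\mathrm{MIO}}_M$ is taken over a larger feasible set while the objective and the subtracted constant are the same. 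The last inequality $\mathcal{N}_M\mathcal{C}_\mathcal{R}(\rho)\le\mathcal{N}_M\mathcal{C}_{l_1}(\rho)$ follows at once by multiplying the known bound $\mathcal{C}_\mathcal{R}(\rho)\le\mathcal{C}_{l_1}(\rho)$ from Ref.~\cite{Napoli2016} by the nonnegative constant $\mathcal{N}_M$. Only the central inequality needs genuine work.

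For the central inequality I would first invoke the dual (witness) characterization of the robustness established in Ref.~\cite{Napoli2016}, which I would phrase as $\mathcal{C}_\mathcal{R}(\rho)=\max\{\Tr(W\rho): W\ge-\openone,\ \bra{i}W\ket{i}\le 0\ \forall i\}$; in particular any such feasible $W$ obeys $\Tr(W\rho)\le\mathcal{C}_\mathcal{R}(\rho)$. For a fixed $\Phi\in\mathrm{MIO}$ I would pass to its adjoint $\Phi^\dagger$, defined by $\Tr[X\Phi(Y)]=\Tr[\Phi^\dagger(X)Y]$, so that $\Tr[M'\Phi(\rho)]=\Tr[\Phi^\dagger(M')\rho]$, and then show that $\frac{1}{\mathcal{N}_M}\Phi^\dagger(M')$ is itself such a feasible witness. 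Two checks are required. First, its diagonal vanishes: $\bra{i}\Phi^\dagger(M')\ket{i}=\Tr[M'\Phi(\ketbra{i}{i})]$, and since $\Phi$ is incoherent $\Phi(\ketbra{i}{i})$ is diagonal while $M'$ has zero diagonal in the basis $\{\ket{i}\}$ fixed by Theorem~\ref{thm::measure}, so the trace is $0$. Second, the lower bound: as $\Phi$ is CPTP its adjoint $\Phi^\dagger$ is positive and unital, hence $\Phi^\dagger(M')\ge\lambda_\mathrm{min}(M')\openone=-\mathcal{N}_M\openone$, using $\lambda_\mathrm{min}(M')=\lambda_\mathrm{min}(M)-\Tr(M)/d\le 0$ and $\mathcal{N}_M=\abs{\lambda_\mathrm{min}(M')}$. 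Feasibility then gives $\Tr[\Phi^\dagger(M')\rho]\le\mathcal{N}_M\mathcal{C}_\mathcal{R}(\rho)$, and maximizing over $\Phi\in\mathrm{MIO}$ yields $\mathcal{C}^{\mathrm{MIO}}_M(\rho)\le\mathcal{N}_M\mathcal{C}_\mathcal{R}(\rho)$. The crux, and the step I expect to be the main obstacle, is recognizing that the MIO structure forces exactly the diagonal constraint and that the positive-unital property supplies exactly the operator lower bound that a robustness witness requires.

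For the tightness claim I would give a single example saturating the entire chain. Take a qubit with $M$ the Pauli-$X$ operator, so $\Tr M=0$, $M'=M$, and $\mathcal{N}_M=\abs{\lambda_\mathrm{min}(X)}=1$, and take $\rho=\ketbra{+}{+}$ the maximally coherent state. The identity channel, which lies in both $\mathrm{IO}$ and $\mathrm{MIO}$, already attains $\Tr[M'\rho]=\bra{+}X\ket{+}=1=\lambda_\mathrm{max}(M')$, which is the largest value any channel can produce, so $\mathcal{C}^{\mathrm{IO}}_M(\rho)=\mathcal{C}^{\mathrm{MIO}}_M(\rho)=1$. Since $\mathcal{C}_\mathcal{R}(\rho)=\mathcal{C}_{l_1}(\rho)=1$ for the maximally coherent qubit, every term in the hierarchy equals $1$, so all three inequalities are simultaneously tight.
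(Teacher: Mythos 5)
Your proof is correct, and its core — the middle inequality — follows essentially the same route as the paper: displace to the traceless $M'$, pass to the adjoint map $\Phi^\dagger$, and show that $\Phi^\dagger(M')/\mathcal{N}_M$ is a feasible robustness witness, with the vanishing diagonal coming from the MIO property applied to $\ketbra{i}{i}$ and the operator bound $\Phi^\dagger(M')\geq-\mathcal{N}_M\openone$ coming from positivity and unitality of the adjoint. (Your justification of that last step is actually more careful than the paper's, which only asserts that the conjugate map ``cannot decrease the minimum eigenvalue.'') Where you genuinely depart from the paper is tightness. The paper argues abstractly: it sets $M=W_\rho$, the optimal robustness witness of an \emph{arbitrary} state $\rho$, so that the identity channel already attains $\mathcal{N}_M\mathcal{C}_\mathcal{R}(\rho)$; it invokes the known qubit coincidence $\mathcal{C}_\mathcal{R}=\mathcal{C}_{l_1}$ for the last inequality; and it uses the maximally coherent state with $M=\rho$ for the first. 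You instead exhibit a single concrete pair, $M=X$ and $\rho=\ketbra{+}{+}$ on a qubit, and verify that every term in the chain equals $1$. Both settle the theorem as stated, since tightness only requires the existence of cases of equality; your example is simpler and checkable by hand, while the paper's construction carries extra information — for \emph{every} state $\rho$ there is an observable saturating $\mathcal{C}^{\mathrm{MIO}}_M(\rho)=\mathcal{N}_M\mathcal{C}_\mathcal{R}(\rho)$ — which is what underwrites the paper's later interpretation of the robustness as the ultimate usefulness of $\rho$ for optimizing \emph{any} observable. One trivial edge case worth a sentence in your write-up: dividing by $\mathcal{N}_M$ presumes $M$ is nontrivial ($\mathcal{N}_M>0$); when $M\propto\openone$ all four quantities vanish and the chain holds trivially.
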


\begin{proof}
In~\cite{Napoli2016}, it was shown that $\mathcal{C}_\mathcal{R}(\rho)$ is equivalent to maximizing $\Tr{\rho W}$ over all Hermitian observables $W$, subject to the constraint that $W \geq - \openone $ and that the diagonal entries of $W$ are nonnegative. Note that our convention differs from the one presented in \cite{Napoli2016} by a negative sign.

We always displace $M$ and consider the matrix $M' = M - \frac{\Tr M}{d} \openone$, and it is clear that a positive scaling factor does not fundamentally change $\mathcal{C}^\mathcal{O}_M$ where $\mathcal{O}$ is MIO or IO. As such, without any loss in generality, we can assume that $M$ is a traceless matrix where the leading matrix elements are zero, and that its smallest eigenvalue is normalized such that $\lambda_\mathrm{min}(M) = -1$. This implies that $\mathcal{N}_M = 1$.  Observe that under these assumptions, $M$ automatically satisfies the constraints on $W$ that was described in the preceding paragraph. 

Recall that $\mathcal{C}^\mathcal{O}_M (\rho) \coloneqq \max_{\Phi \in \mathrm{MIO}} \Tr(M\Phi(\rho))$. Consider the quantity $\Tr(M\Phi(\rho))$ and let $\Phi^*$ be the conjugate map such that $\Tr(M\Phi(\rho)) =\Tr(\Phi^*(M)\rho)$. Since $\Phi$ is a CPTP map and the conjugate map preserves the trace, it cannot decrease the minimum eigenvalue so $\lambda_\mathrm{min}[\Phi^*(M)] \geq \lambda_\mathrm{min}(M)$. Furthermore, we see that as $\Phi$ is MIO or IO, the leading diagonals of $\Phi^*(M)$ must be zero if the leading diagonals of $M$ are zero. This again comes from the definition of the conjugate map $\Tr(M\Phi(\rho)) =\Tr(\Phi^*(M)\rho)$. From this, we can determine that $\Phi^*(M)$ always satisfies the necessary constraints for $W$ described above, and this is true for any $\Phi$ that is an incoherent CPTP map, so we must have  $\mathcal{C}^\mathcal{O}_M(\rho) \leq \mathcal{N}_M \mathcal{C}_\mathcal{R}(\rho)$. 

It was already known that $\mathcal{C}_\mathcal{R}(\rho) \leq \mathcal{C}_{l_1}(\rho)$, and we must have that $\mathcal{C}^{\mathrm{IO}}_M(\rho) \leq \mathcal{C}^{\mathrm{MIO}}_M(\rho)$ since $\mathrm{IO} \subset \mathrm{MIO}$, which leads to the final chain of inequalities $$\mathcal{C}^{\mathrm{IO}}_M(\rho) \leq \mathcal{C}^{\mathrm{MIO}}_M(\rho) \leq \mathcal{N}_M \mathcal{C}_\mathcal{R}(\rho) \leq \mathcal{N}_M\mathcal{C}_{l_1}(\rho). $$

To see that the inequalities are in fact tight, we need to demonstrate that there are cases of $M$ and $\rho$ where equality is achieved. It is already known that when the dimension of the system is $d=2$ then the robustness is identical to the $l_1$ norm of coherence~\cite{Napoli2016}. Furthermore, we know that for any $\rho$ there always exists $W_\rho$ where $\Tr( W_\rho \rho) = \mathcal{C}_\mathcal{R}(\rho)$. In this case, we can simply choose $M=W_\rho$, which is enough to achieve $\mathcal{C}^{\mathrm{MIO}}_M(\rho) = \mathcal{N}_M \mathcal{C}_\mathcal{R}(\rho)$. Finally, we can verify that $\mathcal{C}^{\mathrm{IO}}_M(\rho) = \mathcal{C}^{\mathrm{MIO}}_M(\rho)$ is achieved when the input state $\rho$ is the maximally coherent state and we choose $M = \rho$. Therefore, all the inequalities are tight.
\end{proof}

\begin{figure}
	\centering
    \includegraphics[width=1\linewidth]{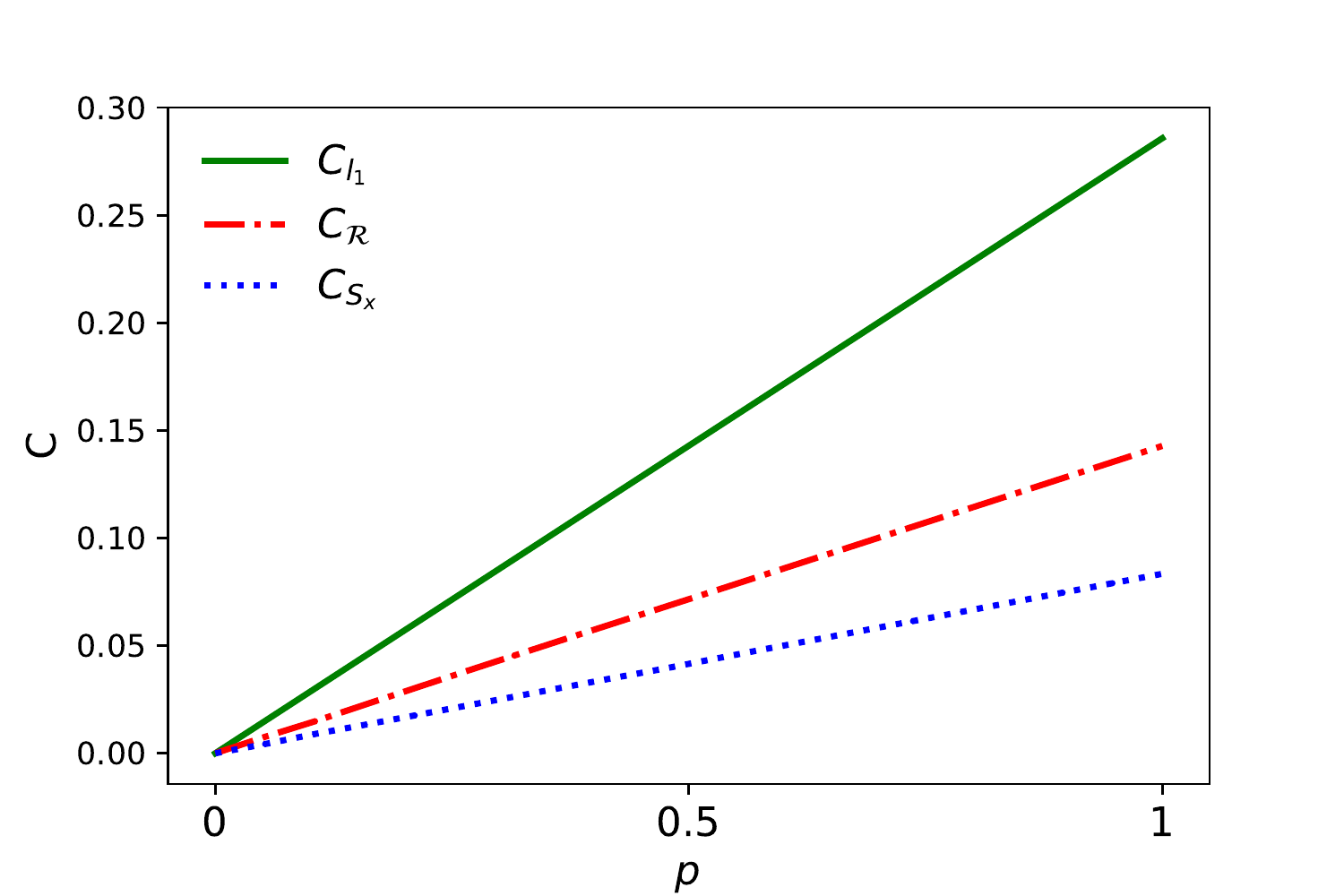}
    \caption{Comparisons among the $l_1$ norm of coherence $\mathcal{C}_{l_1}$(green, solid), the robustness of coherence $\mathcal{C}_{\mathcal{R}}$(red, dash-dotted), and the coherence measure corresponding to magnetization measurement $\mathcal{C}_{S_x}$(red, dotted). We consider the single parameter, 3 qubit state $\rho = (1+p/7)\mathbb{1}/8 - p/7\ket{w}\bra{w}$, where $\ket{w} \coloneqq \frac{1}{\sqrt{3}}(\ket{001}+\ket{010}+\ket{100})$ and $p \in [0,1]$}
    \label{fig::spin}
\end{figure}

\begin{figure}
	\centering
   	\includegraphics[width=1\linewidth]{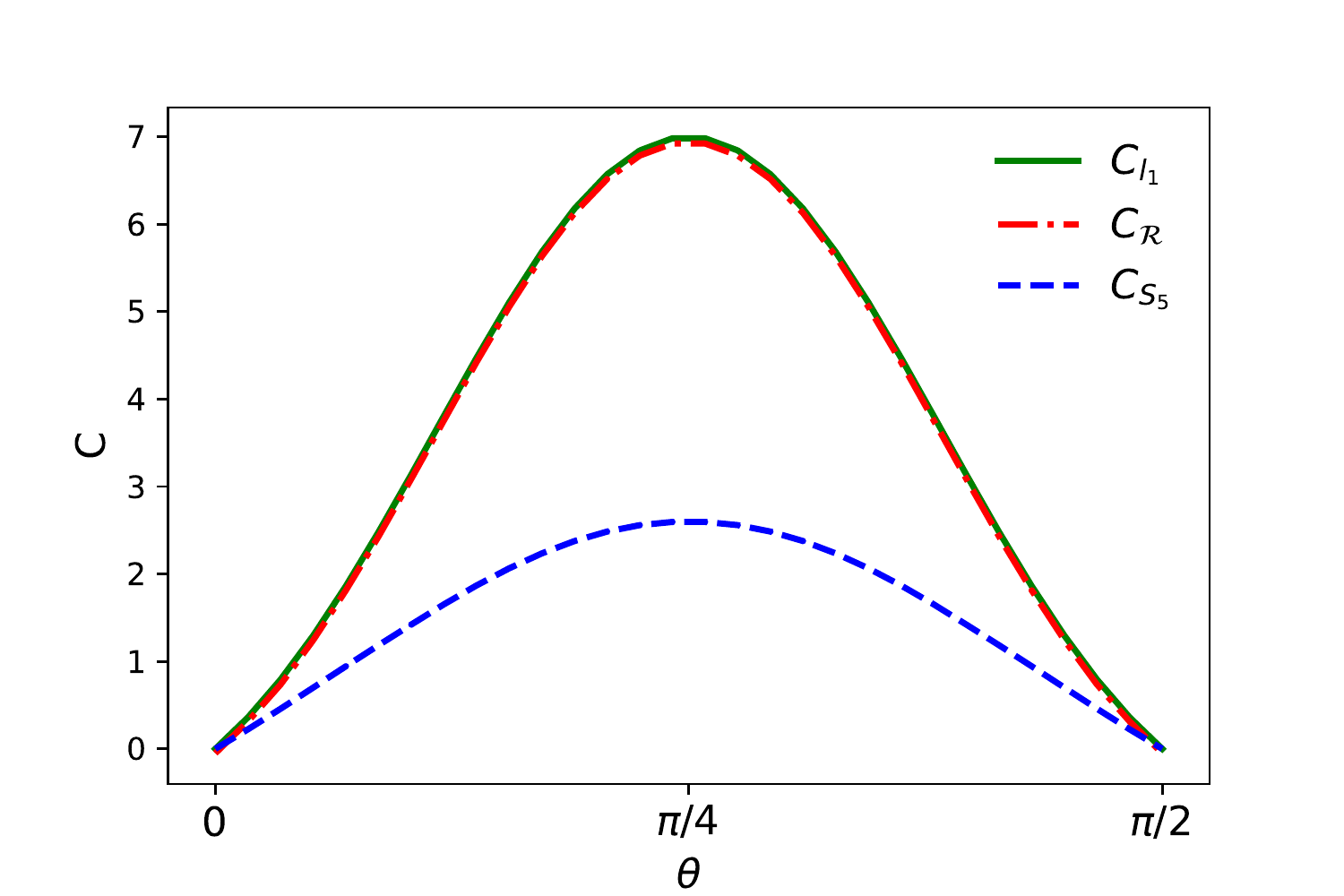}
    \caption{A comparison of $\mathcal{C}_{S_5}$(blue, dashed) with $\mathcal{C}_{l_1}$(green, solid) and $\mathcal{C}_{\mathcal{R}}(red, dash-dotted)$for the state $\ket{\psi(\theta)} = (\cos(\theta)\ket{g} + \sin(\theta)\ket{e})^{\otimes 3}\ket{0}^{\otimes 2}$. Note that a 5 qubit system (as opposed to a 3 qubit one) was chosen in order to avoid saturation of the measure $\mathcal{C}_{S_N}$ when $N=3$. While the quantities are different, the qualitative behaviours are similar across $\theta \in [0,\frac{\pi}{2}]$. For pure state, $\mathcal{C}_{l_1}$ and $\mathcal{C}_\mathcal{R}$ matches in general.}
    \label{fig::superradiance}
\end{figure}

\section{Examples}\label{Examples}
In this section we present numerical examples of coherence measure of the computable measure $\mathcal{C}^{\mathrm{MIO}}_{M}$.

Let us consider for spin systems the total magnetic moment operator. For a system of $N$  spins we can choose for our classical basis $\bigotimes_{i = 1} ^{N} \{ \ket{\uparrow}_i, \ket{\downarrow}_i \}$ where $\{ \ket{\uparrow}_i, \ket{\downarrow}_i \}$ is the eigenbasis of the local spin-$z$ operator. In order to witness the coherence between these basis states, then a simple measurement of the magnetization in the $x$ direction will suffice (See Theorem~\ref{thm::measure} as well as Proposition~\ref{thm::mub}).  The total spin-$x$ operator is defined as $$S_x = \sum_{i = 1}^{N} S_x^i$$ with local spin operators $S_x^i$. Choosing $S_x$ as our observable, any measurement of $\langle S_x \rangle$ is automatically a lower bound to the corresponding coherence measure $\mathcal{C}_{S_x}^\mathcal{O}$. Note that because one can equivalently choose to measure the total magnetization along any direction on the equatorial plane, any non zero measurement of $\langle S_x \rangle$ directly implies the presence of coherence in the $z$ direction. 

One may also choose to instead find the `optimal' measure by finding then implementing the optimal observable achieving $\Tr( W_\rho \rho) = \mathcal{C}_\mathcal{R}(\rho)$~\cite{Napoli2016}. However, the physical implementation of such an observable $W_\rho$ is not always simple. Moreover, if one were interested to quantify the total coherence in the system, there is also no computational advantage to finding the robustness since both $\mathcal{C}^{\mathrm{MIO}}_{S_x}$ and $\mathcal{C}_\mathcal{R}$ are computable via semidefinite programs. This example neatly illustrates how the resource requirements for experimentally detecting and measuring quantum coherence may be simplified via the direct application of Theorem~\ref{thm::measure} and Proposition~\ref{thm::mub}. Figure~\ref{fig::spin} compares $\mathcal{C}_{l_1}$, $\mathcal{C}_\mathcal{R}$ and $\mathcal{C}^{\mathrm{MIO}}_{S_{x}}$ for the state $\rho = (1+p/7)\mathbb{1}/8 - p/7\ket{w}\bra{w}$ where $\ket{w} \coloneqq \frac{1}{\sqrt{3}}(\ket{001}+\ket{010}+\ket{100})$ and $p \in [0,1]$. Note the hierarchy of the coherence measures $\mathcal{C}^{\mathrm{MIO}}_M(\rho) \leq \mathcal{N}_M \mathcal{C}_\mathcal{R}(\rho) \leq \mathcal{N}_M\mathcal{C}_{l_1}(\rho)$ (See Theorem~\ref{thm::hierarchy}).

Several existing coherence measures can also be shown to fall under the framework that was discussed in this article. For instance, in~\cite{Tan2018}, superradiance is studied within the context of coherence. In the idealized model for superradiance, there are $N$-number of two-level atomic systems with the energy levels denoted by $\ket{e^{(i)}}$ and $\ket{g^{(i)}}$ respectively. From this, we define the raising and lowering operators acting on the $i$th subsystem as $D^{(i)}_+ \coloneqq \ket{e^{(i)}}\bra{g^{(i)}}$ and $D^{(i)}_- \coloneqq \ket{g^{(i)}}\bra{e^{(i)}}$, and the collective component of the emission rate, referred to as the superradiant quantity, is $\langle S_N \rangle = \sum_{i\neq j} \langle D_+^{(i)} D_-^{(j)} \rangle$. We see that $S_N$ is a traceless observable whose leading diagonal elements are all zero in the axis defined by $\ket{e^{(i)}}$ and $\ket{g^{(i)}}$. This neatly falls underneath our framework, so any witnessing of superradiance is in fact, a witness of coherence between these basis states and a computable measure $\mathcal{C}^{\mathrm{MIO}}_{S_N}$ may be constructed. We note that this is a considerable improvement upon the original measure in~\cite{Tan2018}, which uses the computationally difficult convex roof construction in order to generalize the measure to a general mixed state.  A comparison of $\mathcal{C}_{S_N}^{\mathrm{MIO}}$ with other coherence measures for the pure state $\ket{\psi(\theta)} = (\cos(\theta)\ket{g} + \sin(\theta)\ket{e})^{\otimes 3}$ is shown in Figure~\ref{fig::superradiance}. Another example that falls under our framework is the fidelity of coherence distillation~\cite{Regula2018}.

\section{Conclusion}

In this article, we demonstrated that every nontrivial Hermitian observable $M$ corresponds to a coherence witness and to every coherence witness, there corresponds a coherence measure $\mathcal{C}^\mathcal{O}_{M}$, where the set of operations $\mathcal{O}$ may be either MIO or IO. In the case of MIO, we show that the measure is in fact always computable via a semidefinite program, leading to an infinitely large set of coherence measures. The measures also show that the task of optimizing $\langle M \rangle$ is the same as the task of maximizing the coherence of the input state, up to the application of some incoherent operation (Theorem~\ref{thm::measure}). They therefore have the operational interpretation of the usefulness of a given quantum state $\rho$ for the purpose or optimizing the observable $\langle M \rangle$. These coherence measures also satisfy a hierarchy $\mathcal{C}^{\mathrm{IO}}_M(\rho) \leq \mathcal{C}^{\mathrm{MIO}}_M(\rho) \leq \mathcal{N}_M \mathcal{C}_\mathcal{R}(\rho) \leq \mathcal{N}_M\mathcal{C}_{l_1}(\rho) $ (Theorem~\ref{thm::hierarchy}). This demonstrates that the robustness of coherence $\mathcal{C}_{\mathcal{R}}$ has an additional physical interpretation as the ultimate usefulness of a state $\rho$ for the purpose of optimizing \textit{any} obervable $M$. The $\l_1$ norm of coherence $\mathcal{C}_{l_1}$ is also interesting because it is expressible in a closed form formula, in comparison to $\mathcal{C}^\mathcal{O}_{M}$ and $\mathcal{C}_{\mathcal{R}}$ which both requires numerical optimization to compute.

A key conclusion of our results is that coherence witnesses and computable measures are in fact plentiful. This may in many cases allows coherence to be verified in the laboratory by simply inferring them from the existing measurement outcomes, without requiring additional specialized equipment. Moreover, the measurement outcomes of such observables are always, up to a constant displacement, a lower bound to a coherence measure $\mathcal{C}^\mathcal{O}_{M}$. Moreover, due to the hierarchy of coherence measures, they can alsko be used to find non-trivial lower bounds to the robustness of coherence and the $l_1$ without requiring full quantum state tomography. We hope that the techniques presented here will be useful to simplify the requirements for teh detection of nonclassical quantum effects in the laboratory, as well as allow new interesting coherence measures with novel physical interpretations to be discovered.

\acknowledgments This work was supported by the National Research Foundation of Korea (NRF) through a grant funded by the Korea government (MSIP) (Grant No. 2010-0018295) and by the Korea Institute of Science and Technology Institutional Program (Project No. 2E27800-18-P043). K.C. Tan was supported by Korea Research Fellowship Program through the National Research Foundation of Korea (NRF) funded by the Ministry of Science and ICT (Grant No. 2016H1D3A1938100). S. C. was supported by the Global PhD Fellowship Program through the NRF funded by the Ministry of Education (NRF-2016H1A2A1908381).

\end{document}